\newcommand{\defeq}{\stackrel{\scriptscriptstyle\text{def}}{=}}
\newcommand{\ie}{\text{i.e.}\xspace}
\newcommand{\etal}{\text{et al.}\xspace} 
\newcommand{\N}{\mathbb{N}}                    
\newcommand{\set}[1]{\left\{#1\right\}}        
\newcommand{\multiset}[1]{\Lbag#1\Rbag}        
\newcommand{\preset}[1]{{}^\bullet #1}  
\newcommand{\postset}[1]{{#1}^\bullet}  
\newcommand{\net}{N}
\newcommand{\trans}[1]{\xrightarrow{#1}}       
\newcommand{\targetMarking}{M}
\newcommand{\sourceMarking}{M'}
\newcommand{\Rset}{A} 
\newcommand{\restrictions}{\mathcal{R}}
\title{Efficient Restrictions of Immediate Observation Petri Nets\thanks{This project has received funding from the European Research Council (ERC) under the European Union's Horizon 2020 research and innovation programme under grant agreement No 787367 (PaVeS)}}
\titlerunning{Efficient restrictions of IO nets}
\author{Michael Raskin \and Chana Weil-Kennedy
\email{\{raskin,chana.weilkennedy\}@in.tum.de}
\\ ORCID: 0000-0002-6660-5673, 0000-0002-1351-8824
}
\institute{Technical University of Munich, Germany}
\authorrunning{M.\,Raskin \and C.\,Weil-Kennedy}
\begin{document}

\maketitle

\begin{abstract}
In a previous paper we introduced
immediate observation Petri nets \cite{EsparzaRW19}, a subclass of Petri nets with application domains in distributed protocols and theoretical chemistry (chemical reaction networks). 
IO nets enjoy many useful properties
\cite{EsparzaRW19,flatness-io-bio},
but like the general case of conservative Petri nets 
they have a \PSPACE-complete reachability problem.
In this paper we explore two restrictions of the reachability problem for IO nets
which lower the complexity of the problem drastically.
The complexity is \NP-complete for the first restriction 
with applications in distributed protocols, 
and it is polynomial
for the second restriction with applications in chemical settings.
%
%
        \keywords{Petri nets, reachability, computational complexity}
\end{abstract}

\section{Introduction}

In this paper we refine our results about the complexity 
of verifying immediate observation Petri nets \cite{EsparzaRW19}
in the case of two restrictions of such nets.
Petri nets and their subclasses 
are widely used and studied in the
context of software and system verification (e.g. \cite{DBLP:journals/automatica/DavidA94}),
but also others such as 
game theory (e.g. \cite{DBLP:conf/apn/Lin08}),
chemical reaction networks (e.g. \cite{DBLP:journals/nc/BaldanCMS10})
etc.
Unfortunately many important problems there
have high complexity,
and reachability is at least $\mathsf{TOWER}$-hard in the general case \cite{DBLP:conf/stoc/CzerwinskiLLLM19}.
This motivates the study of subclasses of Petri nets.

Immediate observation Petri nets (IO nets)
are a reformulation of
immediate observation population protocols,
which have been introduced by
Angluin \etal in \cite{journals/dc/AngluinAER07}.
Initially, they were studied from the point of view
of computing predicates in a distributed system,
where their
expressive power is lower than 
general population protocols 
(conservative Petri nets)
but still considerable.
Many verification problems for IO nets
are \PSPACE-complete;
among them set-parametrized problems for  sets
defined by boolean combinations of bounds
on token counts.
This is a significant improvement compared
to the general or conservative case  of Petri nets,
where
$\mathsf{EXPSPACE}$-hard \cite{DBLP:conf/stoc/CardozaLM76}
and even harder
verification problems are the norm.
IO nets provide a natural description of some distributed systems,
but also can be used to describe enzymatic chemical networks \cite{angeli2007petri}.

Of course,
a subclass of reachability problems 
with a better computational complexity
raises some natural, even if informal, questions.
What allows better complexity and can it be generalized
to some wider subclass?
What keeps the complexity from being even lower
and are there useful subclasses without these obstacles?
Are there applications where a typical problem
can be solved more efficiently?
We believe that branching immediate observation nets,
a generalization of IO nets and basic parallel processes
with reachability problem in \PSPACE\cite{flatness-io-bio},
answer the first question.
The present paper is devoted to the last two questions.

We consider two restrictions, the first one a syntactic restriction defining a subclass of IO nets,
 and the second a condition on the initial and final markings 
considered in the reachability problem for IO nets. 
The first restriction is plausible in some distributed systems, and 
it also bears similarity to
the delayed observation population protocols
introduced by Angluin et al. in \cite{journals/dc/AngluinAER07}.
The second restriction has
 applications in some chemical systems (enzymatic chemical reaction networks,  \cite{angeli2007petri}).
We show the first restriction entails an NP-complete reachability problem, 
and for the second restriction we provide a polynomial algorithm  deciding reachability or giving a witness that the restriction does not hold.

The rest of the paper is organized as follows.
In section \ref{sec:preliminaries}, we recall some general definitions regarding Petri nets,
as well as the classic maximum flow minimum cut problem.
Section \ref{sec:io-nets} defines immediate observation Petri nets.
Then we show the effects for reachability complexity
of two
restrictions on IO nets: keeping transitions enabled once enabled in Section \ref{sec:restriction1},
and requiring all token counts and their combinations to be large or zero in Section \ref{sec:restriction2}.
Finally, we summarize our results in the conclusion
and outline some further directions.

\section{Preliminaries}
\label{sec:preliminaries}

\medskip \noindent \textbf{Multisets.} 
A \emph{multiset} on a finite set \(E\) is a mapping \(C \colon E \rightarrow \N\), i.e. for any $e\in E$, \(C(e)\) denotes the number of occurrences of element \(e\) in \(C\).
Let $\multiset{e_1,\ldots,e_n}$ denote the multiset $C$ such that $C(e)=|\{j\mid e_j=e\}|$.
Operations on \(\N\) like addition or comparison are extended to multisets by defining them component wise on each element of \(E\).
Given \(X\subseteq E\) define \(C(X)\defeq\sum_{e\in X} C(e)\).
We call $\sum_{e\in E} C(e)$ the \emph{size} of $C$ and note it $|C|$.
%

\medskip \noindent \textbf{Place/transition Petri nets with weighted arcs.}
A \emph{Petri net} $N$ is a triple $(P,T,W)$ consisting of a finite set of \emph{places} $P$, a finite set of \emph{transitions} $T$ and a \emph{weight function} $W \colon (P \times T) \cup (T \times P) \rightarrow \mathbb{N}$. 
A \emph{marking} $M$ is a multiset on $P$, and we say that a marking $M$ puts $M(p)$ \emph{tokens} in place $p$ of $P$. The \emph{size} of $M$, denoted by $|M|$, is the total number of tokens in $M$.
The \emph{preset} $\preset{t}$ and \emph{postset} $\postset{t}$ of a transition $t$ of $T$ are the multisets on $P$ given by $\preset{t}(p)=W(p,t)$ and $\postset{t}(p)=W(t,p)$. A transition $t$ is \emph{enabled} at a marking $M$ if $\preset{t} \leq M$, i.e. $\preset{t}$ is component-wise smaller or equal to $M$.
If $t$ is enabled then it can be \emph{fired}, leading to a new marking $M'=M - \preset{t} + \postset{t}$. 
We let $M \xrightarrow{t} M'$ denote this.
%
Given $\sigma=t_1 \ldots t_n$ we write $M \xrightarrow{\sigma} M_n$ when $M \xrightarrow{t_1} M_1 \xrightarrow{t_2} M_2 \ldots \xrightarrow{t_n} M_n$, and call $\sigma$ a \emph{firing sequence}. 
We write $M' \trans{*} M''$ if $M' \xrightarrow{\sigma} M''$ for some $\sigma \in T^*$, and say that $M''$ is \emph{reachable} from $M'$. 
%

\medskip \noindent \textbf{Flows and cuts.} 
A \emph{flow graph} is a triple $G=(V,A,c)$ 
where $V$ is a finite set of vertices,
$A \subseteq V^2$ is a finite set of arcs,
and $c:A\rightarrow \N \cup \set{\infty}$ is a nonnegative 
\emph{capacity}
function
on arcs.
Given an arc $a\in A$, we call $c(a)$ the \emph{capacity} of $a$.
Notice that this capacity can be infinite.
A flow graph contains two special vertices $i$ and $o$, 
called the \emph{inlet} and \emph{outlet}, 
such that $i$ has no incoming arc and $o$ has no outgoing arc.
A \emph{flow} of a flow graph is a function $f:A \rightarrow \N$ 
such that $f(a)\le c(a)$ for each arc $a\in A$,
and for each vertex $v\in V \setminus \set{i,o}$, 
the sum of the flow over $v$'s incoming arcs
is equal to the sum of the flow over $v$'s outgoing arcs.
The \emph{value} of a flow is the 
sum $\sum_{(i,p)\in A} f((i,p))$ of the flow over all arcs from the inlet,
or equivalently the 
sum $\sum_{(p,o)\in A} f((p,o))$ of the flow over all arcs to the outlet.
A \emph{cut} in a flow graph $G=(V,A,c)$
is a pair of disjoint subsets $V_I \sqcup V_O = V$ such that 
the inlet is in $V_I$ and the outlet is in $V_O$.
The \emph{capacity} of a cut $(V_I, V_O)$ is the sum
of the capacities of all the arcs 
going from vertices in $V_I$ to vertices in $V_O$.
We say an arc $a=(u,v)$ \emph{crosses the cut}, if $u\in V_i$ and $v \in V_O$.

We recall two classic theorems.

\begin{theorem}[Max-flow min-cut theorem \cite{ford_fulkerson_1956}]
\label{thm:FordFulkerson}
In a flow graph, the maximum value of a flow is equal to 
the minimum capacity of a cut.
\end{theorem}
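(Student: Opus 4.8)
The plan is to prove the two inequalities separately and then exhibit a flow and a cut that meet in the middle. The first, easy inequality is \emph{weak duality}: the value of any flow is at most the capacity of any cut. Fix an arbitrary flow $f$ and an arbitrary cut $(V_I, V_O)$. Using the conservation constraint at every vertex of $V \setminus \set{i,o}$, one telescopes the balance equations over all vertices in $V_I$ to show that $\mathrm{val}(f)$ equals the \emph{net} flow crossing the cut, namely the total flow on arcs going from $V_I$ to $V_O$ minus the total flow on arcs going from $V_O$ to $V_I$. Since $f$ is nonnegative and obeys $f(a) \le c(a)$, this net quantity is bounded above by the sum of capacities of the forward-crossing arcs, which is exactly $\mathrm{cap}(V_I, V_O)$. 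Taking the maximum on the left and the minimum on the right gives $\max_f \mathrm{val}(f) \le \min \mathrm{cap}(V_I, V_O)$.

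For the matching inequality I would construct a maximum flow together with a cut of equal capacity, via residual graphs and augmenting paths. Given a flow $f$, define its \emph{residual graph} $G_f$ on the same vertex set, placing a forward residual arc $(u,v)$ whenever $f((u,v)) < c((u,v))$ and a backward residual arc $(v,u)$ whenever $f((u,v)) > 0$. Starting from the zero flow, I would repeatedly look for a directed path from $i$ to $o$ in $G_f$ (an augmenting path) and push the largest admissible amount of extra flow along it, updating $f$ accordingly. When no augmenting path remains, set $V_I$ to be the set of vertices reachable from $i$ in $G_f$ and $V_O = V \setminus V_I$; since $o$ is then unreachable, $(V_I, V_O)$ is a genuine cut. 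The key observation is that every arc from $V_I$ to $V_O$ must be saturated ($f(a) = c(a)$) and every arc from $V_O$ to $V_I$ must carry zero flow, for otherwise a residual arc would lead out of $V_I$, contradicting unreachability. Substituting these equalities into the net-flow identity from the weak-duality step yields $\mathrm{val}(f) = \mathrm{cap}(V_I, V_O)$, so this flow and this cut are simultaneously optimal, closing the loop with the first inequality.

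The step needing the most care is \textbf{termination}, and in particular the interaction with infinite capacities allowed here ($c \colon A \to \N \cup \set{\infty}$ while flows take values in $\N$). When every residual capacity along the chosen augmenting paths is finite, each augmentation raises $\mathrm{val}(f)$ by a positive integer; if the maximum value is finite this caps the number of iterations and the construction halts at an optimal flow/cut pair as above. The delicate situation is an $i$-to-$o$ path consisting solely of infinite-capacity arcs: then $\mathrm{val}(f)$ is unbounded and the supremum is $\infty$, but any cut is crossed forward by at least one arc of such a path and therefore also has capacity $\infty$, so both sides of the equality equal $\infty$ and the statement still holds. Making the iterative argument detect this unbounded regime cleanly — rather than looping forever — and reconciling it with the finite-capacity termination bound is the part of the proof I expect to require the most attention.
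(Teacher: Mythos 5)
This theorem is stated in the paper as an imported classical result (cited to Ford--Fulkerson) and is not proved there, so there is no in-paper argument to compare against; your task was effectively to reconstruct the textbook proof, and you have done so correctly. Your two-step structure --- weak duality via telescoping the conservation equations over $V_I$ to get $\mathrm{val}(f) = (\text{forward flow across the cut}) - (\text{backward flow across the cut}) \le \mathrm{cap}(V_I,V_O)$, followed by the residual-graph/augmenting-path construction that produces a flow and a cut of equal value --- is the standard and correct route, and your observation that every forward-crossing arc is saturated and every backward-crossing arc carries zero flow when $i$ cannot reach $o$ in $G_f$ is exactly the right closing step. On the one point you flag as delicate: the clean way to order the argument is to first dispose of the unbounded case by checking whether $o$ is reachable from $i$ using only infinite-capacity arcs. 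If it is, every cut is crossed forward by an infinite-capacity arc of that path and both sides of the identity are $\infty$; if it is not, then the set of vertices reachable from $i$ along infinite-capacity arcs already yields a cut of finite capacity, so by your weak-duality inequality the maximum flow value is finite, and only then do you invoke integrality to bound the number of augmentations by that finite value. With that reordering the termination worry disappears and the proof is complete.
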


\begin{theorem}[Dinitz algorithm \cite{dinic1970}]
Given a flow graph, a flow with the maximum value and a cut with the minimum capacity
can be found in polynomial time.
\end{theorem}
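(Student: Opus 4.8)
The plan is to recall the structure of Dinitz's algorithm, argue it terminates after polynomially many phases each of polynomial cost, and then observe that a minimum cut can be read off the final residual graph at essentially no extra cost.

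First I would introduce the residual graph of a flow $f$: every arc $a=(u,v)\in A$ contributes a forward residual arc of capacity $c(a)-f(a)$ and a backward residual arc $(v,u)$ of capacity $f(a)$, and an arc is usable precisely when its residual capacity is positive. The algorithm runs in phases. Each phase begins with a breadth-first search from the inlet $i$ in the current residual graph, computing the distance (number of arcs) from $i$ to every vertex; I would then form the level graph consisting only of those usable arcs $(u,v)$ with $\mathrm{dist}(v)=\mathrm{dist}(u)+1$. Inside this acyclic level graph I would compute a \emph{blocking flow} — one saturating some arc on every $i$-to-$o$ path — by repeated depth-first advances along admissible arcs, retreating and deleting an arc whenever a search reaches a dead end, and augmenting whenever it reaches the outlet $o$.

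The two quantitative facts carry the argument, and the first is the crux. I must show that the distance from $i$ to $o$ in the residual graph strictly increases from each phase to the next; since this distance never exceeds $|V|$, there are at most $|V|$ phases. This rests on an invariant: pushing a blocking flow through the level graph can only introduce new usable arcs that point from a higher level to a lower one, so every $i$-to-$o$ path surviving into the next phase is strictly longer than the previous shortest distance, forcing that distance up. I expect the careful level-monotonicity bookkeeping here to be the main obstacle. The second fact is routine: within one phase each advance either deletes an arc or, on reaching $o$, saturates at least one arc, so a blocking flow is found in $O(|V|\cdot|A|)$ time. Combining, a maximum flow is computed in $O(|V|^2\cdot|A|)$ time, which is polynomial, and by Theorem~\ref{thm:FordFulkerson} its value equals the minimum cut capacity.

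To extract a minimum cut from the maximum flow $f$, I would run one final breadth-first search from $i$ in the residual graph of $f$ and set $V_I$ to be the vertices reachable from $i$ and $V_O=V\setminus V_I$. The algorithm halts exactly when $o$ is no longer reachable, so $i\in V_I$ and $o\in V_O$ and this is a genuine cut; moreover every arc crossing from $V_I$ to $V_O$ is saturated while every arc crossing back carries no flow, so the value of $f$ equals the capacity of $(V_I,V_O)$. By Theorem~\ref{thm:FordFulkerson} this simultaneously certifies that $f$ is maximum and that $(V_I,V_O)$ is minimum. Infinite capacities require only a small caveat: the maximum flow is finite exactly when some cut has finite capacity, and in that case no infinite-capacity arc can cross the final cut, so the reachability computations and phase bound are unaffected; the degenerate unbounded case is detected as soon as an $i$-to-$o$ augmenting path of purely infinite-capacity arcs appears.
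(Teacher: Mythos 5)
The paper offers no proof of this statement at all: it is imported as a black-box classical result with a citation to Dinitz (1970), so there is nothing internal to compare your argument against. What you have written is the standard textbook proof of Dinitz's algorithm, and it is sound: the residual/level-graph setup, the blocking-flow subroutine with its $O(|V|\cdot|A|)$ per-phase bound via the ``each step deletes an arc or saturates one'' accounting, and the phase bound of $|V|$ via strict growth of the inlet-to-outlet residual distance are all correctly identified, and you rightly single out the level-monotonicity lemma as the one step needing real care (the precise statement is that every residual arc after a phase satisfies $\mathrm{dist}(v)\le\mathrm{dist}(u)+1$ with respect to the old levels, and the blocking property kills every path that is tight everywhere). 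The cut extraction by residual reachability is also the standard argument, and it in fact \emph{proves} the max-flow min-cut equality directly rather than merely using it, so your appeal to Theorem~\ref{thm:FordFulkerson} there is a convenience, not a dependency. Your caveat about infinite capacities is the right one to raise given the paper's definition of flow graphs; note that in every flow graph the paper actually constructs, all arcs out of the inlet have finite capacity $M(p)$, so the maximum flow is bounded by $|M|$ and the unbounded degenerate case never arises in the applications. In short: the proposal is a correct self-contained proof of a result the paper deliberately leaves to the literature.
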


\section{Immediate observation Petri nets}
\label{sec:io-nets}
We recall the definition of immediate observation nets (IO nets) from \cite{EsparzaRW19}.

\begin{definition} 
\label{def:IOnet}
A transition $t$ of a Petri net is an \emph{immediate observation transition} (IO transition)  if there are 
places $p_s, p_d, p_o$, not necessarily distinct, such that $\preset{t}=\multiset{p_s,p_o}$ and $\postset{t}=\multiset{p_d,p_o}$.
We call $p_s, p_d, p_o$ the \emph{source}, \emph{destination}, and \emph{observed} places of $t$, respectively. 
We denote by $p_s \trans{p_o} p_d$ such a transition.
A Petri net is an \emph{immediate observation net} (IO net) 
if all its transitions are IO transitions.
\end{definition}

Following the graphical convention of \cite{MontanariR95} for contextual nets, 
we represent the Petri net arcs $(p_o,t)$ and $(t,p_o)$ by an undirected arc between $t$ and $p_o$ in our figures.
This emphasizes that transition $t$ has a read-only relation to its observed place $p_o$.
In the examples, we also consider IO nets containing transitions with no observed place. 
To make the net a formally correct IO net, it suffices to add an extra marked place which acts as observed place for these transitions.

IO nets are \emph{conservative}, \ie there is no creation or destruction of tokens.

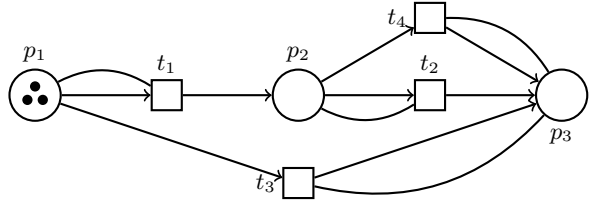
\begin{figure}[h]
\centering%
\begin{tikzpicture}[->, node distance=1.75cm, auto, thick]
      \node[place, tokens=3] (p1) {};
      \node[transition] (t1) [right of=p1] {};
      \node[place] (p2) [right of=t1] {};
      \node[transition] (t3) [below=0.6 of p2] {};
      \node[transition] (t2) [right of=p2] {};
      \node[place] (p3) [right of=t2] {};
      \node[transition] (t4) [above=0.6 of t2] {};
      
      \path[-]
      (p1) edge[bend left] node {} (t1)
      (t2) edge[bend left] node {} (p2)
      (t3) edge[bend right] node {} (p3)
      (t4) edge[bend left] node {} (p3)
      ;
      \path[->]
      (p1) edge node {} (t1)
      (t1) edge node {} (p2)
      (p2) edge node {} (t2)     
      (t2) edge node {} (p3)
      (p1) edge node {} (t3)
      (p2) edge node {} (t4)
      (t4) edge node {} (p3)
      (t3) edge node {} (p3)
      ;

      \node[] () [above= -1pt of p1] {$p_1$};
      \node[] () [above= -1pt of p2] {$p_2$};
      \node[] () [below= -1pt of p3] {$p_3$};
      \node[] () [above= -1pt of t1] {$t_1$};
      \node[] () [above= -1pt of t2] {$t_2$};
      \node[] () [left= -1pt of t3] {$t_3$};
      \node[] () [left= -1pt of t4] {$t_4$};
\end{tikzpicture}
\caption{An IO net.}%
\label{fig:io}
\vskip-5mm
\end{figure}

\begin{example}
Figure \ref{fig:io} shows  an IO net taken from the literature on population protocols \cite{journals/dc/AngluinAER07}. 
Intuitively, it models a protocol allowing a crowd of undistinguishable agents that can only interact in pairs 
to decide whether they are at least 3.
Given a marking $M_0$ with tokens only in $p_1$, if $M_0(p_1) \geq 3$,
then repeated firing of an arbitrary enabled transition
eventually puts all the tokens into $p_3$.
\end{example}

In \cite{EsparzaRW19}, 
we showed that given an IO net $\net$ and two markings $M, M'$,
deciding whether $M'$ is reachable from $M$ is a \PSPACE-complete problem.
The proof of \PSPACE-hardness for the reachability problem in IO nets
uses a reduction from the halting problem
of linear-space Turing machines.
The reduction is done 
by simulating the 
runs of the Turing machine: 
places describe the state of the head and of the tape cells, 
and transitions model the movement of the head and the change in the symbols on the tape cells.
In the construction a specific ``success'' place becomes marked
if and only if the machine reaches the halting state
without exceeding the permitted space.
 
The nets provided by this reduction have
two common properties.
First, the transitions get enabled and disabled a  large number of times.
Second, the markings put at most one token per place.
We show how forcing a strong enough contrary condition to 
at least one of these properties leads to much easier
verification.


\section{First restriction: transition enabling}
\label{sec:restriction1}

The \PSPACE-hardness proof for IO reachability relies
on the observation requirements of some transitions
switching between satisfied and unsatisfied many times.
In some distributed systems, observations correspond
to irrevocable declarations of the agents,
for example in some multi-phase commit protocols.
We consider IO nets where a token move enabled by observing some token remains enabled even when the observed token has changed places.
We formalize such a property in the following definition.

\begin{definition}
        An IO net is \emph{non-forgetting} if
        for each transitions 
        $p\xrightarrow{r}q$ and
        $r\xrightarrow{s}r'$
        there is also a transition 
        $p\xrightarrow{r'}q$.
\end{definition}

Consider a marking of an IO net
where the observation place of some transition with source place $p$ and destination place $q$ is marked.
If there is a token in place $p$, then it can move to $q$.
We say that the \emph{token move from $p$ to $q$ is enabled}.
In a non-forgetting IO net, once the token move from $p$ to $q$ is enabled in some marking of a firing sequence,
it stays enabled in  the subsequent markings of the firing sequence.
Notice that the token move from $p$ to $q$ being enabled in a marking
is not equivalent to a transition from $p$ to $q$ being enabled:
a transition is enabled when both its observation place and its source place are marked,
whereas a token move is enabled as soon as the observation place of some suitable transition is marked.

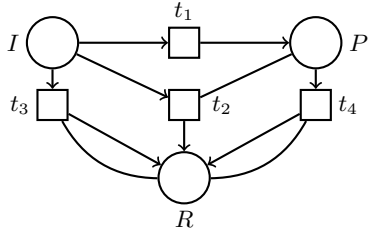
\begin{figure}
\vskip-5mm
        \centerline{\mbox{
\begin{tikzpicture}[->, node distance=1.75cm, auto, thick]

        \node[place] (I) {};
        \node[transition] (verify) [right of=I] {};
        \node[place] (V) [right of=verify] {};
        \node[transition] (object) [below=4mm of verify] {};
        \node[transition] (join) [left of=object] {};
        \node[transition] (concede) [right of=object] {};
        \node[place] (B) [below=4mm of object] {};

        \path[-]
        (V) edge (object) 
        (B) edge [bend right] (concede) 
        (B) edge [bend left] (join)
        ;
        \path[->]
        (I) edge (verify) (verify) edge (V)
        (I) edge (object) (object) edge (B)
        (I) edge (join)  
        (join) edge node {} (B)
        (V) edge (concede) 
        (concede) edge node {} (B)
        ; 
        
      \node[] () [left= -1pt of I] {$I$};
      \node[] () [right= -1pt of V] {$P$};
      \node[] () [below= -1pt of B] {$R$};
      \node[] () [above= -1pt of verify] {$t_1$};
      \node[] () [right= +1pt of object] {$t_2$};
      \node[] () [left= -1pt of join] {$t_3$};
      \node[] () [right= -1pt of concede] {$t_4$};
\end{tikzpicture} }}

        \caption{
A non-forgetting Petri net.
        }
        \label{fig:non-forgetting-net}
\vskip-5mm
\end{figure}

\begin{example}
\label{ex:non-forgetting}
The non-forgetting IO net
of Figure \ref{fig:non-forgetting-net} models
one of the steps of updating a shared state: A proposal can be published and stored,
and every agent has an opportunity to veto it.

All agents start in the initial state $I$.
Some agent can propose a change by moving from state $I$ to state $P$.
If there is a proposal, an agent can move from state $I$ to state $P$
to support the proposal,
or go to the state $R$ to reject the proposal.
If there is an agent rejecting the proposal (i.e. in the state $R$),
other agents can move to $R$ both from $I$ and from $P$ to
recognise the fact that the proposal has been rejected.
Note that the agents cannot reject a proposal before it has been
created, which is encoded by $P$ being the observed place of $t_2$.
Also note that the agent proposing a change cannot start rejecting
it until some other agent rejects it.
\end{example}

The reachability problem for such IO nets becomes much simpler.

\begin{theorem}
The reachability problem for non-forgetting IO nets is in \NP.
\end{theorem}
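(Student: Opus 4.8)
The plan is to convert any firing sequence into a short, polynomially checkable certificate by exploiting the monotone behaviour of enabled token moves that the non-forgetting condition guarantees. First I would make the informal observation in the text precise: say that a pair $(p,q)$ is an \emph{enabled move} at a marking $M$ if some transition $p \trans{o} q$ has its observed place $o$ marked, and write $E(M)$ for the set of all enabled moves. I would then prove that $E$ is monotone along firings, i.e. $M \trans{t} M'$ implies $E(M) \subseteq E(M')$. The only way an enabled move $(p,q)$, witnessed by an observed place $r$, can be lost is by emptying $r$, which happens only when firing some $r \trans{s} r'$; but the non-forgetting condition then supplies a transition $p \trans{r'} q$, and $r'$ holds a token after the firing, so $(p,q)$ stays enabled.

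Since $E(M)$ is a subset of $P \times P$, it takes at most $\placeCount^2$ distinct values and, being monotone, only grows. Hence every firing sequence splits into at most $\placeCount^2+1$ maximal \emph{phases} on which the enabled-move set is constant. Recording the marking at each phase boundary yields a sequence $M = N_0, N_1, \dots, N_k = M'$ with $k \le \placeCount^2$. Because the net is conservative, every $N_j$ has total size $|M|$, so each is described by $O(\placeCount \log|M|)$ bits and the whole sequence has polynomial size; this is the NP certificate. It then remains to verify in polynomial time, for each $j$, that $N_{j+1}$ is reachable from $N_j$ using only moves in $E(N_j)$.

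The heart of the argument is that this within-phase reachability is exactly a transportation problem, solvable by the flow machinery of Section \ref{sec:preliminaries}. As moves in $E(N_j)$ remain enabled throughout the phase by monotonicity, a token at $p$ can be routed to any $q$ reachable from $p$ (including $p$ itself) in the directed graph $(P, E(N_j))$. I would encode feasibility as a flow graph with inlet $i$ and outlet $o$, a source node $p^s$ and sink node $q^t$ per place, an arc $i \to p^s$ of capacity $N_j(p)$, infinite-capacity arcs $p^s \to q^t$ whenever $q$ is reachable from $p$, and arcs $q^t \to o$ of capacity $N_{j+1}(q)$. By Theorem \ref{thm:FordFulkerson} a valid routing exists iff the maximum flow equals $|M|$, and by Dinitz's algorithm this is decided in polynomial time. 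The nondeterministic procedure guesses $N_0,\dots,N_k$ and accepts iff every consecutive pair passes this flow test; completeness follows from the phase decomposition and soundness from monotonicity, which lets the guessed flows be executed one phase at a time.

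The main obstacle is the precise equivalence between a feasible flow and an actual single-token firing sequence realizing it. The subtle direction is realizability: the moves of the flow must be ordered so that a token is always present at the place it is moved from, and special care is needed for transitions whose only observer is their own source place, since firing such a transition consumes two tokens from that place and therefore cannot drain it completely. I expect the treatment of these self-observing transitions --- either by arguing that a non-forgetting net always makes an alternative observer available, or by lowering the relevant capacities in the flow graph --- to be the most delicate part of the proof; by comparison, the monotonicity lemma, the phase bound, and the polynomial size of the certificate are routine.
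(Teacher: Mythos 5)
Your proposal follows essentially the same route as the paper: the monotonicity of the set of enabled token moves under the non-forgetting condition, the decomposition of a run into at most $|P|^2$ phases with a constant enabled-move set, the markings at phase boundaries as the polynomial-size certificate, and a max-flow computation to verify each phase. The only structural difference is that you build a bipartite flow graph over the \emph{transitive closure} of the enabled-move relation, whereas the paper keeps one vertex per place and lets flow pass through intermediate places along direct arcs for each enabled move; the two encodings accept the same instances, since a flow in the paper's graph decomposes into inlet--outlet paths that realize exactly the reachability relation you precompute. One remark in your favour: the difficulty you isolate at the end --- a transition $p\trans{p}q$ whose only observer is its own source, which has preset $\multiset{p,p}$ and hence cannot fire when $p$ holds a single token --- is a genuine subtlety. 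The paper's appendix argument asserts that a transition from $p$ to $q$ is enabled as soon as its observed place is marked and $p$ is nonempty, which silently skips exactly this case; your suggestion of adjusting the definition of an enabled move or the capacities for such self-observing transitions is the right kind of repair, and you should carry it out rather than leave it as a prospect.
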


\begin{proof}
Let $\net$ be a non-forgetting IO net.
Consider a (non-empty) firing sequence $\sigma$ of $\net$ from markings $M$ to $M'$.
It can be decomposed into $n$ non-empty subsequences 
$\sigma_i$ such that $M=M_0\trans{\sigma_1} M_1\trans{\sigma_2} M_2\ldots \trans{\sigma_n}M_n=M'$ for some $n>0$, and
such that $M_i$ are the markings of the firing sequence
in which new token moves become enabled. 
Recall that since $\net$ is non-forgetting, 
a token move once enabled remains enabled. 
There are at most $|P|^2$ such subsequences in any firing sequence, and 
in each subsequence the set of enabled token moves is fixed.
\begin{example}
Consider the net of Example \ref{ex:non-forgetting}, 
and the firing sequence $t_2^3 t_4$ from marking $(4,1,0)$, 
which put 4 tokens in $I$, 1 tokens in $P$ and 0 token in $R$, 
to marking $(1,0,4)$. 
This firing sequence is decomposed into two subsequences:
 $(4,1,0) \trans{t_2} (3,1,1)$ and $ (3,1,1)\trans{t_2^2 t_4}(1,0,4)$.
 In the first, the token moves from $I$ to $R$ and from $I$ to $P$ are enabled.
 In the second, these token moves as well as the token move from $P$ to $R$ are enabled.

\end{example}
To show that
	the reachability problem
	for non-forgetting IO nets
	is in \NP,
we define a reachability certificate
and show how it can be verified in polynomial time.
	The certificate 
	corresponding to a firing sequence
	consists of the markings
	in which some token move is enabled for the first time.
	Such a certificate has polynomial length 
by the above considerations on the number of subsequences.

We now show that the reachability problem in an IO net with a fixed set of enabled token moves is reducible to the maximum flow problem on graphs.
	Let $\net$ be an IO net, let $M,M'$ be two markings of $\net$.
	We define $G$ as the flow graph 
	with vertices identified with the places $P$ of $\net$,
	as well as two additional vertices $i$ and $o$, the inlet and outlet
	of the flow graph.
	For each enabled token move from $p$ to $q$ for some places $p,q$,
	there is an arc from $p$ to $q$ in $G$ with infinite capacity.
	Each vertex $p$ identified with a place of $\net$
	has one incoming arc from the inlet $i$
	with capacity $M(p)$,
	and one outgoing arc to the outlet $o$
	with capacity $M'(p)$.

\begin{example}	
	Figure \ref{fig:non-forgetting-flow} illustrates 
	two such flow graphs 
        for the non-forgetting IO net of Example \ref{ex:non-forgetting}.
        The first flow graph corresponds to the enabled token moves 
        from $I$ to $R$ and from $I$ to $P$, with markings $M=(4,1,0)$ and $M'=(3,1,1)$.
        The second flow graph corresponds to the enabled token moves 
        from $I$ to $R$, from $I$ to $P$ and from $P$ to $R$, with markings $M=(3,1,1)$ and $M'=(1,0,4)$.
        \end{example}

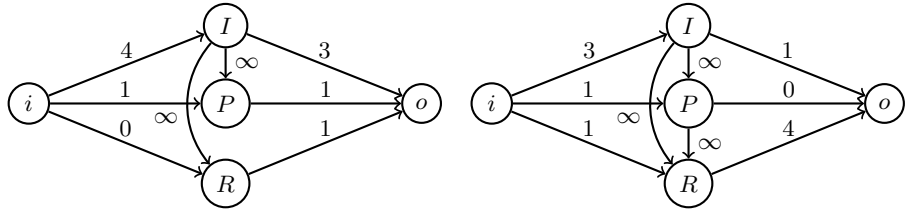
\begin{figure}
\vskip-5mm
	\mbox{

\begin{tikzpicture}[->, node distance=1.5cm, auto, thick]

\node[circle,draw] (inlet) {$i$};
\node[circle,draw] (V) [right=2 of inlet] {$P$};
\node[circle,draw] (I) [above=4mm of V] {$I$};
\node[circle,draw] (B) [below=4mm of V] {$R$};
\node[circle,draw] (outlet) [right=2cm of V] {$o$};

        \path[->]
        (inlet) edge node  [above=-0.3mm] {4} (I)
        (inlet) edge node  [above=-0.4mm] {1} (V)
        (inlet) edge node  [above=-0.3mm] {0} (B)
        (I) edge node {$\infty$} (V)
        (I) edge[bend right=40] node[below left] {$\infty$} (B)
        (I) edge node [above=-0.3mm] {3} (outlet)
        (V) edge node [above=-0.4mm] {1} (outlet)
        (B) edge node [above=-0.3mm] {1} (outlet)
        ;

\end{tikzpicture} }
        \hfil
        \mbox{

\begin{tikzpicture}[->, node distance=1.5cm, auto, thick]

\node[circle,draw] (inlet) {$i$};
\node[circle,draw] (V) [right=2 of inlet] {$P$};
\node[circle,draw] (I) [above=4mm of V] {$I$};
\node[circle,draw] (B) [below=4mm of V] {$R$};
\node[circle,draw] (outlet) [right=2cm of V] {$o$};

        \path[->]
        (inlet) edge node  [above=-0.3mm] {3} (I)
        (inlet) edge node  [above=-0.4mm] {1} (V)
        (inlet) edge node  [above=-0.3mm] {1} (B)
        (I) edge node {$\infty$} (V)
        (V) edge node {$\infty$} (B)
        (I) edge[bend right=40] node[below left] {$\infty$} (B)
        (I) edge node [above=-0.3mm] {1} (outlet)
        (V) edge node [above=-0.4mm] {0} (outlet)
        (B) edge node [above=-0.3mm] {4} (outlet)
        ;

\end{tikzpicture} }
        \caption{Flow graphs corresponding to the non-forgetting net of 
        Fig.~\ref{fig:non-forgetting-net}.}
        \label{fig:non-forgetting-flow}
\vskip-5mm
\end{figure}


	A firing sequence $\sigma$ from $M$ to $M'$ in $\net$
	corresponds naturally to an integer flow $f$ on $G$,
	where for all vertices $p$ and $q$ 
	corresponding to places of the IO net,
	$f(i,p)=M(p), f(p,o)=M'(p)$ and $f(p,q)$ is equal to 
	the number of transitions from $p$ to $q$ in $\sigma$.
	This flow has value $|M|=|M'|$.
	
	Conversely, an integer flow
	of value $|M|=|M'|$
	corresponds to a firing sequence in $\net$,
	provided $\net$ has a fixed set of enabled token moves.
	Let us consider such a flow $f$.
	It corresponds to a multiset $\theta$ of token moves.
Starting with the marking $M$,
we remove from the multiset some token move 
with the source place  having  more tokens than in $M'$
and fire some corresponding enabled transition.
We continue until we reach $M'$.
The details of the construction
and its  correctness proof
are purely technical and can be found in the appendix.

We see that verifying a certificate
requires  a polynomial number of invocations 
of a polynomial-time algorithm.
This concludes the proof.
\end{proof}

In fact the reachability problem is \NP-complete.

\begin{restatable}{theorem}{ThNonforgetHard}
Reachability problem for non-forgetting IO nets is \NP-hard.
\end{restatable}
\begin{proof}[Sketch]     \NP-hardness of reachability is proved by a reduction from the \NP-complete SAT problem.
        Consider a SAT instance represented as a circuit of binary ``NAND'' ($\neg{(x \wedge y)}$) operations.
        One can construct a net such that 
        its runs correspond to the input nodes of the circuit 
        choosing
        arbitrary input values, and the operation nodes of the circuit evaluating
        the function given the chosen values of the inputs.
        The technical details are provided in the appendix.
\end{proof}

\section{Second restriction: token counts}
\label{sec:restriction2}

Another property of the \PSPACE-hardness reduction for IO nets
is the low number of tokens in each place.
Specifically, no reachable marking puts more than one token in any place. 
Some systems exhibit a very different behaviour.
For instance in most cases of chemical reaction networks, 
 the number of individual molecules is much larger 
than the number of species of molecules.
Additionally, we do not expect any chance ``near-misses'' 
between the configuration of the molecules before and after a reaction sequence.
If the total amount of molecules of some group of species
before the reaction sequence 
is
approximately equal 
to the amount of molecules of some other group of species afterwards,
there must be a precise equality following from some conservation laws.

This behaviour can be formalized by the following condition.

\begin{definition}
        A pair of markings $M$ and $M'$ of an IO net of place set $P$
        is a \emph{near-miss} pair
	if there exists sets of places $X$ and $Y$ such that
        $0 < |M(X)-M'(Y)| \le |P|^3$.
        A pair which  is not a near-miss is called a \emph{no-near-miss} pair.
\end{definition}


Observe that each place of markings $M$ and $M'$ 
such that $M,M'$ are a no-near-miss pair
can be either unmarked or contain at least
$|P|^3$ tokens. 
This can be seen by examining sets $X=\set{p}$ and $Y=\emptyset$, or $X=\emptyset$ and $Y=\set{p}$ in the definition.

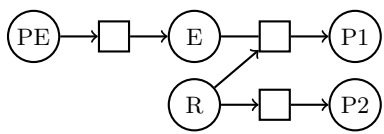
\begin{figure}
\vskip-5mm
        \centerline{\mbox{\begin{tikzpicture}[->, node distance=1cm, auto, thick]
\node[place](PE){PE};
\node[transition] (produce) [right=5mm of PE] {};
\node[place](E)[right=5mm of produce]{E};
\node[transition](use) [right=5mm of E] {};
\node[place](R)[below=2mm of E]{R};
\node[place](P1)[right=5mm of use]{P1};
\node[transition](avoid) [right=5mm of R] {};
\node[place](P2)[right =5mm of avoid]{P2};

\path[->]
(PE) edge (produce) (produce) edge (E)
(R) edge (use) (use) edge (P1)
(R) edge (avoid) (avoid) edge (P2)
;
\path[-]
(E) edge (use)
;
\end{tikzpicture}}}
        \caption{An example of an IO
        net with enzyme production and use.}
        \label{fig:dep-pathways}
\vskip-5mm
\end{figure}

\begin{example}
Consider the IO  net of Figure \ref{fig:dep-pathways}
which models a system
where an enzyme $E$ can be produced by an enzyme producer $PE$, and
where a resource molecule $R$ can transform 
 into a product molecule $P1$ in the presence of an enzyme $E$,
 or into a product molecule $P2$.
On the one hand,
the total amount of the two products $P1$ and $P2$
together must
match the amount of resource $R$ consumed;
on the other hand, it would be surprising if the two
products were produced in 
the same amounts with high but imperfect precision,
as there is nothing ensuring such an approximate equality.
Informally, we can consider 
the scales from an example of \cite{bistability}
cited in \cite{angeli2007petri}.
Five species of molecules are considered
        in a milliliter-scale cell (although with a different net
        which is not immediate observation).
The concentrations of molecules are measured in picomoles per milliliter.
As a picomole contains more than $10^{11}$ molecules,
equalities that hold up to $10^3$ molecules
have a relative error of $10^{-8}$.
Such equalities 
might be expected to follow from some
conservation laws and be precise.
\end{example}

\begin{theorem}
\label{thm:reach-near-miss}
The IO net reachability problem for no-near-miss pairs of markings is in P.
Moreover, there is a polynomial-time algorithm
such that for every pair  of markings $M,M'$
it either resolves reachability, giving a witness firing sequence if it exists,
or reports a near-miss in $M$ and $M'$.
\end{theorem}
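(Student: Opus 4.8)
The plan is to reduce reachability to a single maximum-flow computation, reusing the flow-graph correspondence from the proof of the first restriction, and to isolate the no-near-miss hypothesis to the one step where it is genuinely needed. First I would compute the forward closure $S$ of $\supp{M}$ under token moves: the least set containing $\supp{M}$ such that whenever $p\in S$ and some transition $p\trans{r}q$ has $r\in S$, then $q\in S$. This closure is computable in polynomial time, and since every place that is ever marked along a firing sequence from $M$ lies in $S$, a necessary condition for reachability is $\supp{M'}\subseteq S$. I then build the flow graph $G$ exactly as in the first theorem, but with the \emph{fixed} set of token moves $A=\{p\to q : \exists r\in S,\ p\trans{r}q\}$: vertices are the places together with an inlet $i$ and an outlet $o$, each move in $A$ is an infinite-capacity arc, and each place $p$ carries an arc $i\to p$ of capacity $M(p)$ and an arc $p\to o$ of capacity $M'(p)$.

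The easy direction is necessity, and it holds for arbitrary IO nets: any firing sequence from $M$ to $M'$ fires only transitions whose observed place is marked, hence lies in $S$, so the tally of fired moves is an integral flow of value $|M|=|M'|$ on $G$. Using Theorem~\ref{thm:FordFulkerson} and the Dinitz algorithm I would compute a maximum flow and a minimum cut. By the cut structure of $G$ the finite cuts correspond to move-closed sets $W\subseteq P$ with capacity $|M|-M(W)+M'(W)$, so the maximum flow equals $|M|$ exactly when $M'(W)\ge M(W)$ for every move-closed $W$. Whenever $|M|\neq|M'|$ or the maximum flow is strictly below $|M|$, the minimum cut exhibits a move-closed $W$ with $M(W)>M'(W)$, which certifies non-reachability for \emph{any} pair of markings; the algorithm answers ``unreachable.''

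The hard direction is sufficiency, and this is exactly where the no-near-miss hypothesis enters. Given an integral maximum flow of value $|M|$, I would first discard cycles and decompose it into $|M|$ unit token-paths, each describing the journey of one token from a place of $\supp{M}$ to a place of $\supp{M'}$. I then realize these paths as an actual firing sequence, choosing an order of individual move-firings that keeps every needed observed place marked at the moment it is used. The abundance guaranteed by the no-near-miss condition---every nonzero place holds at least $|P|^3$ tokens, far more than the $O(|P|)$ sentinels one ever needs simultaneously---is what lets me retain a spare token in each observed place until all moves depending on it have fired, and to prime any transient observer along the way. Because all token counts are encoded in binary and the realized sequence may be exponentially long, the witness is reported succinctly as the flow $f$ (the Parikh image of moves) together with a polynomial scheduling recipe.

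The main obstacle is this realization step, for two reasons. Certifying that each move stays enabled requires controlling a support that, unlike in the non-forgetting case, is not monotone: observed places can empty out, and an observer needed only in the middle of the sequence must be populated and later drained. The no-near-miss slack of $|P|^3$ is precisely what gives enough room to schedule around every such emptying without a coincidental shortfall, and the delicate part is to show that the \emph{only} way the schedule can get stuck is when the number of tokens that must leave some region of places nearly matches the number that must remain in or arrive at another---that is, when there are sets $X,Y$ with $0<|M(X)-M'(Y)|\le|P|^3$. Turning a stuck schedule into such an explicit near-miss witness, and conversely proving the schedule always succeeds under the no-near-miss hypothesis, is the technical heart of the argument; I expect the careful bookkeeping of sentinels and transient observers, rather than any single clever idea, to be where the work lies.
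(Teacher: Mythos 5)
There is a genuine gap in the sufficiency direction: your forward closure $S$ of the support of $M$, and the resulting single max-flow computation, do not capture the constraint that a token sent to an observed place must itself be able to continue to a place that is marked in $M'$. Concretely, take places $a,b,r,z$ with transitions $a\trans{r}b$ and $z\trans{z}r$, and markings $M$ with $N$ tokens in each of $a$ and $z$, $M'$ with $N$ tokens in each of $b$ and $z$, for $N>|P|^3$. This is a no-near-miss pair (all the sums $M(X)$ and $M'(Y)$ lie in $\{0,N,2N\}$). Your closure gives $S=\{a,z,r,b\}$, your flow graph has the infinite-capacity arcs $a\to b$ and $z\to r$, and the maximum flow is $|M|$ (route $N$ units along $i\to a\to b\to o$ and $N$ along $i\to z\to o$), so your algorithm proceeds to the realization step. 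But $M'$ is unreachable: to fire $a\trans{r}b$ one must first mark $r$, which strands a token in $r$ forever, contradicting $M'(r)=0$. Your claimed dichotomy --- the schedule succeeds, or a stuck schedule yields sets $X,Y$ with $0<|M(X)-M'(Y)|\le|P|^3$ --- is therefore false: here the schedule gets stuck and there is no near-miss to report. The proposal has no third branch for this outcome, so the ``technical heart'' you defer cannot be carried out as stated.

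What is missing is precisely the iterative machinery the paper builds. Whether an intermediate place $r$ is usable depends on the whole trajectory --- it must be forward-reachable from the token's start $p$ \emph{and} backward-reachable to its end $q$, using only observers that are themselves visited by other legal trajectories --- so the paper tracks restrictions as triples $(p,r,q)$ and closes them under \emph{both} a forward condition (condition 3) and a backward condition (condition 4), interleaved with min-cut-based forbidding of pairs $(p,q)$ in a bipartite flow graph on initial/final copies of the places. Forbidding a triple can remove an observer, which forbids more triples, which removes arcs from the flow graph, which changes the min cut; the whole thing is a polynomial fixpoint, and only for a \emph{stable} restriction set does the solution-flow/realization step go through with the trichotomy ``firing sequence / unreachable / near-miss.'' In the example above it is the backward closure that forbids every triple with intermediate place $r$ (no transition leaves $r$), which then kills the arc from $a$ to $b$ and drops the maximum flow below $|M|$, correctly reporting unreachability. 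Your necessity direction, the use of max-flow/min-cut, and the observation that the $|P|^3$ slack pays for sentinel tokens are all sound and do appear in the paper; but a single forward closure plus one flow computation cannot replace this fixpoint.
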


Even though the no-near-miss property is \NP-complete
(e.g. via SUBSET-SUM),
making a proof of its violation
an alternative valid answer of the algorithm
simplifies IO reachability.

\begin{remark}
Requiring only that the initial and final markings of a firing sequence have
many tokens in the non-empty places does not give us a better complexity 
than the general \PSPACE-complete case.
\end{remark}

\begin{example}
Consider two markings on the net of Figure \ref{fig:dep-pathways},
        $M$
        with $200$ tokens in $PE$ and $400$ tokens in $R$,
        and 
        $M'$
        with $200$ tokens in $E$ and $400$ tokens in $P1$.
 The pair $(M,M')$ is a no-near-miss, and we will illustrate the algorithm 
 by verifying reachability from $M$ to $M'$.
\end{example}

The core idea of the algorithm 
is to maintain an increasing set of restrictions.
Once there are no restrictions to add,
we either construct a firing sequence from $M$ to $M'$
satisfying the obtained restrictions and no other ones,
use the restrictions to prove that $M$ cannot reach $M'$,
or find a near-miss in $M$ and $M'$.


\subsection{Restrictions}

We first recall some definitions from \cite{EsparzaRW19}, and then describe our restrictions and what it means for a restriction set to be stable.

\paragraph{Trajectories and histories.}
Since the transitions of IO nets do not create or destroy tokens, we can give tokens identities. 
Given a firing sequence, each token of the initial marking follows a \emph{trajectory} through the places of the net until it 
reaches the final marking of the sequence. The trajectories of the tokens between given source
and target markings constitute a \emph{history}. 

A \emph{trajectory} of  IO net $N$ is a sequence $\tau =p_1 \ldots p_k$ of places. We let  $\tau(i)$ denote the $i$-th place of $\tau$. The \emph{$i$-th step} of $\tau$ is the pair $\tau(i)\tau(i+1)$. A \emph{history} $H$ of length $h$ is a multiset of trajectories of length $h$. Given an index $1 \leq i \leq h$, \emph{the $i$-th marking of $H$}, denoted $M_{H}^i$, is defined as follows: for every place $p$, $M_{H}^i(p)$ is the number of trajectories $\tau \in H$ such that $\tau(i)=p$. The markings 
$M_{H}^1$ and $M_{H}^h$ are the \emph{initial} and \emph{final} markings of $H$, and we write
$M_{H}^1 \trans{H} M_{H}^h$. A history $H$ of length $h\geq 1$ is \emph{realizable} if there exist transitions $t_1, \ldots, t_{h-1}$ 
and numbers $k_1, \ldots, k_{h-1} \geq 0$ such that 
\begin{itemize}
\item $M_{H}^1 \trans{t_1^{k_1}}M_{H}^2 \cdots  M_{H}^{h-1} \trans{t_{h-1}^{k_{h-1}}} M_{H}^h$, where for every $t$  we define $\sourceMarking \trans{t^0} \targetMarking$ if{}f $\sourceMarking=\targetMarking$.
\item For every $1 \leq i \leq h-1$, there are exactly $k_i$ trajectories $\tau \in H$ such that $\tau(i)\tau(i+1) = p_s p_d$, where $p_s, p_d$ are the source and target places of $t_i$, and all other trajectories 
$\tau \in H$ satisfy $\tau(i)=\tau(i+1)$.
Moreover, there is at least one trajectory $\tau$ in $H$ such that $\tau(i)\tau(i+1) = p_o p_o$, where $p_o$ is the observed place of $t_i$.
                We say that $t_i$ \emph{realizes} step $i$ of $H$.
\end{itemize}

We say that $t_1^{k_1} \cdots t_{h-1}^{k_{h-1}}$ realizes $H$.
Intuitively, at a step of a realizable history only one transition occurs, although perhaps multiple times, for
different tokens.  From the definition of realizable history we immediately obtain:
\begin{itemize}
\item $\sourceMarking \trans{*} \targetMarking$ if{}f there exists a realizable history with $\sourceMarking$ and $\targetMarking$ as initial and final markings. 
\item Every firing sequence that realizes a history of length $h$ has accelerated length at most $h$.
\end{itemize}

\paragraph{Restriction definition.}
Given an IO net $\net$, places $p,q,r$ of $\net$, and 
 two markings $M$ and $M'$,
we say that \emph{a token goes from $p$ to $q$ via $r$} 
if there exists a realizable history $H$ of length $h$ between $M$ and $M'$
and
a trajectory $\tau$ in $H$ 
such that $\tau(1)=p$, $\tau(h)=q$ 
and $\tau(i)=r$ for some $i \in \set{1,\ldots,h}$.

Given a pair $M,M'$,
our algorithm computes a set $\restrictions$ of 
\emph{restrictions} of the form
$(p,r,q)$.
We say a restriction $(p,r,q)$ is \emph{correct}
if no token goes from $p$ to $q$ via $r$, 
\ie if
 there is no realizable history
from $M$ to $M'$ containing a trajectory from $p$ to $q$ passing through $r$.
We say that a pair of places $(p,q)$ is \emph{forbidden} if 
for all $r\in P$ the restriction $(p,r,q)$ 
is in $\restrictions$.
\emph{Forbidding} a pair $(p,q)$ means adding the restriction 
$(p,r,q)$ to $\restrictions$
for all $r\in P$.
A pair of places $(p,q)$ that is not forbidden is \emph{allowed}.


\paragraph{Flow graph.}
We define a correspondence between the
reachability problem in an IO net with a (correct) restriction set
and the maximum flow problem for a certain flow graph.

Let $\net$ be an IO net of place set $P$, 
let $M,M'$ be two markings of $\net$,
and let $\restrictions$ be a set of restrictions.
We define the \emph{flow graph} $G=(V,A,c)$
with $2|P|+2$ vertices.
There are two vertices for each place $p\in P$,
an ``initial'' copy $v_p^i$ and a ``final'' copy $v_p^f$,
as well as a distinguished inlet vertex $i$
and a  distinguished outlet vertex $o$.
For each place $p \in P$,
 there is an arc $a=(i, v_p^i)$ 
 with capacity $c(a)=M(p)$,
and an arc $a=(v_p^f,o)$ 
with  capacity $c(a)=M'(p)$.
For each pair of places $(p,q) \in P^2$
such that $(p,q)$ is allowed in $\restrictions$,
there is an arc $a=(v_p^i,v_q^f)$
from the initial $p$-labeled vertex to the final $q$-labeled vertex 
with infinite capacity.
Note that the maximum flow value in graph $G$ thus 
constructed is at most $|M|=|M'|$.

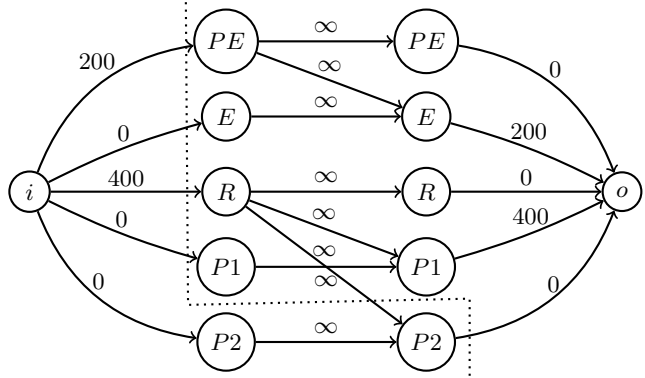
\begin{figure}[h]
\centering%

\begin{tikzpicture}[->, node distance=10mm, auto, thick]

\node[circle,draw] (inlet) {$i$};
\node[circle,draw] (p2a) [right=20mm of inlet] {$R$};
\node[circle,draw] (p1a) [above of=p2a] {$E$};
\node[circle,draw] (p0a) [above of=p1a] {$PE$};
\node[circle,draw] (p3a) [below of=p2a] {$P1$};
\node[circle,draw] (p4a) [below of=p3a] {$P2$};
\node[circle,draw] (p2b) [right=2cm of p2a] {$R$};
\node[circle,draw] (p1b) [above of=p2b] {$E$};
\node[circle,draw] (p0b) [above of=p1b] {$PE$};
\node[circle,draw] (p3b) [below of=p2b] {$P1$};
\node[circle,draw] (p4b) [below of=p3b] {$P2$};
\node[circle,draw] (outlet) [right=20mm of p2b] {$o$};

        \path[->,solid]
        (inlet) edge [bend left] node [above=+1.0mm] {200} (p0a)
        (inlet) edge [bend left=2mm] node [above=-0.3mm] {0} (p1a)
        (inlet) edge node [above=-0.5mm] {400} (p2a)
        (inlet) edge [bend right=2mm] node [above=-0.3mm] {0} (p3a)
        (inlet) edge [bend right] node [above=+0.0mm] {0} (p4a)
        (p0a) edge node [above] {$\infty$} (p0b)
        (p1a) edge node [above] {$\infty$} (p1b)
        (p2a) edge node [above] {$\infty$} (p2b)
        (p3a) edge node [below] {$\infty$} (p3b)
        (p4a) edge node [above] {$\infty$} (p4b)
        (p0a) edge node [above] {$\infty$} (p1b)
        (p2a) edge node [above] {$\infty$} (p3b)
        (p2a) edge node [above] {$\infty$} (p4b)
        (p0b) edge [bend left] node [above=+0.0mm] {0} (outlet)
        (p1b) edge [bend left=2mm] node [above=-0.0mm] {200} (outlet)
        (p2b) edge node [above=-0.3mm] {0} (outlet)
        (p3b) edge [bend right=2mm] node [above=-0.0mm] {400} (outlet)
        (p4b) edge [bend right] node [above=-0.5mm] {0} (outlet)
        ;

        \node[minimum size=0,inner sep=0] (cut1) [above left=3mm of p0a, yshift=0mm] {};
        \node[minimum size=0,inner sep=0] (cut2) [below left=3mm of p3a] {};
        \node[minimum size=0,inner sep=0] (cut5) [above right=4mm of p4b] {};
        \node[minimum size=0,inner sep=0] (cut6) [below right=4mm of p4b,yshift=1mm] {};

        \draw [-,dotted] 
        (cut1) -- (cut2) -- (cut5) -- (cut6)
        ;

\end{tikzpicture}
   \caption{Flow graph for the IO net of Figure \ref{fig:dep-pathways} with a cut.}%
   \label{fig:flow-double-cut}
\label{fig:flow-double}
\vskip-5mm
\end{figure}

\begin{example}
\label{ex:flow-double}
 Figure \ref{fig:flow-double} illustrates the flow graph $G$
 constructed for the IO net of Figure \ref{fig:dep-pathways},
 the markings $M=(200,0,400,0,0)$ and $M'=(0,200,0,400,0)$,
         and the restriction set 
         that allows only pairs
         of the form $(p,p)$ 
         and also the pairs $(PE,E)$,
         $(R,P1)$, $(R,P2)$.
\end{example}

A realizable history $H$ from $M$ to $M'$ naturally corresponds 
to a flow  of value $|M|$: the flow that saturates all the arcs with the finite capacities (i.e. the arcs from the inlet and to the outlet), and assigns to an infinite-capacity arc from $v_p^i$ to $v_q^f$ the number of trajectories from $p$ to $q$ in $H$. 
Since this flow saturates all the finite edges, it is a maximum flow.

\paragraph{Stable restriction set.}
 
We define the notion of a stable set of restrictions for a pair of marking $M$ and $M'$.
Intuitively, a stable set of restrictions does not immediately exclude
reachability from $M$ to $M'$,
and cannot be extended.

\begin{definition}
A set $\restrictions$ of correct restrictions for an IO net $\net$ and configurations $M$ and $M'$ is \emph{stable}
if the following conditions hold.
\begin{enumerate}
        \item The maximum flow in the corresponding flow graph is
                equal to the size $|M|$ of the configurations $M$ (and $M'$).
        \item For each two places $p$ and $q$,
                if there is a minimum cut of the flow graph with
                $v_p^i$ in the  outlet component
                and $v_q^f$ in the inlet component,
                the pair $(p,q)$ is forbidden.
        \item For each larger set of restrictions
$\restrictions'\supsetneq\restrictions$,
either there is a pair $(p,q)$
such that the triple $(p,p,q) \in \restrictions'\setminus\restrictions$,
or there is exists a transition $s\trans{o}d$ and triples $(p,s,q),(p',o,q')\notin\restrictions'$
and $(p,d,q)\in\restrictions'\setminus\restrictions$.
        \item For each larger set of restrictions
$\restrictions'\supsetneq\restrictions$,
either there is a pair $(p,q)$
such that the triple $(p,q,q) \in \restrictions'\setminus\restrictions$,
or there exist a transition $s\trans{o}d$ and
 triples $(p,d,q),(p',o,q')\notin\restrictions'$
and $(p,s,q)\in\restrictions'\setminus\restrictions$.
\end{enumerate}
\end{definition}

Each of these conditions prohibits some
property that can rule out reachability
or imply new  restrictions.
We give some intuition now, then prove formally in Section \ref{sec:firing-sequence}
that in the case where $M$ and $M'$ are a no-near-miss pair,
we can build a realizable history from $M$ to $M'$ from a stable set of restrictions. 
Moreover the history constructed will show that the set of restrictions cannot be extended.

We call the first two conditions
\emph{flow-based stability conditions}. 
The first condition corresponds to the fact that 
if a restriction set leads to a flow graph with a maximum
flow smaller than $|M|$, then there can be no realizable history
from $M$ to $M'$ consistent with such restrictions.
The second condition 
uses the fact that a minimum cut has the same value as a maximum flow,
which has size $|M|$ by the first flow-based condition.
Let $(p,q)$ be a pair violating the condition.
A max flow $f$ that uses the edge from $v_p^i$ to $v_q^f$
can be decomposed into a sum of two flows $f_1$ and $f_2$:
$f_1$ the flow with value 1 along path $i-v_p^i-v_q^f-o$ 
and $f_2=f-f_1$ which has value $|M|-1$.
Flow $f_1$ uses two arcs of the minimum 
cut thus yielding a contradiction by leaving
a cut of capacity $|M|-2$ to $f_2$.
This contradicts existence of a maximum flow 
using the edge from $v_p^i$ to $v_q^f$
and thus the existence of a realizable history
from $M$ to $M'$ with trajectories from $p$ to $q$.
%


\begin{example}
 Figure \ref{fig:flow-double-cut} illustrates a minimal cut on the flow graph $G$
 of Example \ref{ex:flow-double} in which
 the path $i\to{}v_{R}^i\to{}v_{P2}^f\to{}o$
 contains two arcs crossing the cut.
 The restriction set is not stable and $(R,P2)$ must be forbidden.
\end{example}

We call the last two conditions 
\emph{reachability-based stability conditions}.
They rule out an inductive proof
of a larger restriction set in the following sense. 
Given a larger set $\restrictions'$ which violates  one of these conditions, 
we will show by induction on the step number that any realizable history deduced from $\restrictions$ is also coherent with $\restrictions'$, and thus we can replace $\restrictions$ with the larger set $\restrictions'$.



\subsection{Firing sequence construction}
\label{sec:firing-sequence}

We show how to construct a firing sequence from a stable restriction set, 
possibly reporting a near-miss instead.
The proof that the near-miss reports are correct is after the construction, in Section \ref{sec:correctness}.

 Given a flow graph $G=(V,A,c)$, we define two operations
 on the capacity $c$  
 relative to a place pair $(p,q) \in P^2$ and an integer $k > 0$.
 \emph{Increasing $c$ by $k$ along $(p,q)$}
         consists in increasing $c(i,v_p^i)$ and $c(v_q^f,o)$ by $k$.
 \emph{Decreasing $c$ by $k$ along $(p,q)$}
         consists in decreasing $c(i,v_p^i)$ and $c(v_q^f,o)$ by $k$.
         This decreasing operation is not possible 
         if $c(i,v_p^i)$ or $c(v_q^f,o)$ are smaller than $k$.

\paragraph{From stable restriction set to solution flow.}
Given a stable set of restrictions with $b$ allowed pairs $(p,q)$,
 a \emph{solution flow}
is a result of the following procedure:
Construct the flow graph $G$.
Decrease the capacity by $|P|$ along each allowed pair;
if this step fails because some arc has insufficient capacity,
terminate the algorithm and report that 
$M,M'$ is a near-miss pair.
Otherwise, compute a maximal flow.
If it has value less than $|M|-b\times{}|P|$,
terminate the algorithm and report that
$M,M'$ is a near-miss pair.
Otherwise, increase its capacity by $|P|$ along each (allowed) pair.

\begin{example}
\label{ex:restrictions-solution-flow}
In our running example, consider a stable set of restrictions $\restrictions$
allowing only the triples 
$(PE,PE,E)$,
$(PE,E,E)$,
$(R,R,P1)$,
and
$(R,P1,P1)$.
This corresponds to a solution flow assigning the edges of the path
$i\to{}PE\to{}E\to{}o$ the value $200$
and the edges of the path 
$i\to{}R\to{}P1\to{}o$ the value $400$.
\end{example} 
        
Observe that 
when a solution flow exists, it might not be unique.
The algorithm builds a firing sequence from the solution flow.

\paragraph{From solution flow to firing sequence.}
Let
$\restrictions$ be a stable restriction set of the algorithm,
and let $f$ be a corresponding solution flow. 
Intuitively, our construction of the solution flow 
makes sure the flow has value
at least $|P|$ along each pair $(p,q)$ 
 allowed by $\restrictions$.
We use the  reachability-based stability conditions
to construct a realizable history from this flow,
 such that for every pair $(p, q)$
there are at most $f(v_p^i,v_q^f)$ trajectories from $p$ to $q$.

        We define three markings $M_m,M_i$ and $M_f$.
We denote $\Rset(p,q)$ the set of all places $r$
such that the triple $(p,r,q)$ is allowed, \ie $(p,r,q)\notin\restrictions$.
        Let $M_m$ be the marking such that $M_m(r)$
	is equal to the cardinality of the 
set $\set{(p,q) | r\in{}\Rset(p,q)}$ for all $r$.
        Let $M_i$ be the marking such that
        $M_i(p)=\sum_q |\Rset(p,q)|$.
	Note that as $|\Rset(p,q)|\leq{}|P|$
	we have $M_i(p)\leq{}f(i,v_p^i)$.
	Symmetrically, let $M_f$ be the marking such that
	$M_f(q)=\sum_p |\Rset(p,q)|$;
	we have $M_f(q)\leq{}f(v_q^f,o)$.
	We are going to construct a history from $M_i$ to $M_m$
	and from $M_m$ to $M_f$.

\begin{example}
        In our running example with $\restrictions$, 
        we obtain $M_i=\multiset{PE,PE,R,R}$,
        $M_f=\multiset{E,E,P1,P1}$, and $M_m=\multiset{PE,E,R,P1}$.
\end{example}

We build a history from $M_i$ to $M_m$ 
 with  trajectories
labeled by allowed pairs $(p,q)$ with many trajectories per pair.
Each trajectory for pair $(p,q)$ starts in place $p$.
The stability condition guarantees that we can extend some
trajectory to extend the set of places reached by trajectories
labeled $(p,q)$, until trajectories of every pair
have reached all allowed intermediate places $r$ such that $(p,r,q)$ is allowed.
For each reached place $r$
some trajectory stays in $r$ until the end of the history.
The history from $M_m$ to $M_f$ is built in a similar way
but using backward search from $M_f$.
After combining the two histories into a history from $M_i$ to $M_f$,
we duplicate some trajectory for each pair of places
until we have a history from $M$ to $M'$.
The construction consists
of technical details and can be found in the appendix.



Finally, we extract a firing sequence from the realizable history from $M$ to $M'$
by associating a transition and an iteration count to each step of the history.
Each step with $k$ trajectories
going from $p_s$ to $p_d$ with $p_s\neq p_d$
is associated to a transition $t$ iterated $k$ times from $p_s$ to $p_d$,
where $t$ realizes the step.
This is possible by realizability of the history.

\subsection{Correctness given a stable restriction set}
\label{sec:correctness}

 We  prove that given a 
 stable set of correct restrictions,
 the algorithm always yields a correct answer
in polynomial time.
 In case of a near-miss, both reporting the near-miss
 and correctly resolving reachability is considered a correct answer.

A near miss is reported in two cases of the solution flow construction, the second being more technical.
We give a sketch of the proof, the technical details are provided
in the appendix.

 \begin{restatable}{lemma}{LmNMReports}
 The near-miss reports are correct.
 \end{restatable}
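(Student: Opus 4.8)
The plan is to check each of the two near-miss reports in the solution-flow construction in turn, in both cases exhibiting explicit witness sets $X,Y$ with $0<|M(X)-M'(Y)|\le|P|^3$. Throughout, write $d_p$ for the number of allowed pairs of the form $(p,q)$ and $e_q$ for the number of allowed pairs $(p,q)$ with second coordinate $q$, so that $d_p,e_q\le|P|$ and $\sum_p d_p=\sum_q e_q=b\le|P|^2$, where $b$ is the number of allowed pairs. I would lean only on the two flow-based stability conditions; the reachability-based conditions are not needed here. Recall also that by the first flow-based condition the original flow graph has maximum flow exactly $|M|$, so in any maximum flow every inlet and every outlet arc is saturated, and consequently every cut of the original graph has capacity at least $|M|$.

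A preliminary observation handles the first report: stability forces that if $M(p)=0$ then no pair $(p,q)$ is allowed, and symmetrically if $M'(q)=0$ then no $(p,q)$ is allowed. Indeed, if $M(p)=0$ the inlet arc into $v_p^i$ has capacity $0$, so $v_p^i$ carries no flow and has no residual arc entering it; likewise $o$ has no entering residual arc. Taking $V_I$ to be the closure of $\{i,v_q^f\}$ under residual successors then gives a closed set containing neither $v_p^i$ nor $o$, hence a minimum cut with $v_p^i$ in the outlet component and $v_q^f$ in the inlet component. By the second flow-based condition the pair $(p,q)$ must be forbidden, contradicting that it is allowed. Now, the first report fires when decreasing fails; since decreasing removes $|P|\,d_p$ from the inlet arc at $p$, a failure there means $M(p)<|P|\,d_p\le|P|^2$ with $d_p\ge1$, so by the preliminary observation $M(p)\neq0$. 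Hence $0<M(p)<|P|^3$, and $X=\{p\}$, $Y=\emptyset$ witness a near-miss (symmetrically $X=\emptyset$, $Y=\{q\}$ when the failure is at an outlet arc).

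For the second report I would apply the max-flow min-cut theorem to the reduced graph to obtain a cut $(V_I,V_O)$ of reduced capacity strictly below $|M|-b\,|P|$. Writing $S=\{p:v_p^i\in V_O\}$ and $T=\{q:v_q^f\in V_I\}$, no infinite arc crosses, so the same partition read on the original graph has capacity $M(S)+M'(T)\ge|M|$. A pair is counted in $\sum_{p\in S}d_p+\sum_{q\in T}e_q$ once for each of the conditions ``source in $S$'' and ``target in $T$'' it satisfies, and there are no allowed pairs with source outside $S$ and target outside $T$ (such an arc would cross with infinite capacity), so this sum equals $b+c$, where $c$ is the number of allowed pairs with source in $S$ and target in $T$. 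Substituting into the reduced-capacity inequality and using $M(S)=|M|-M(P\setminus S)$ gives $M'(T)-M(P\setminus S)<|P|\,c\le|P|^3$, while the original-cut bound gives $M'(T)-M(P\setminus S)\ge0$. The boundary value $0$ is impossible: equality makes the original cut minimal, so by the second flow-based condition every pair counted by $c$ is forbidden, forcing $c=0$ and hence the contradiction $M'(T)-M(P\setminus S)<0$. Therefore $0<M'(T)-M(P\setminus S)\le|P|^3$, and $X=P\setminus S$, $Y=T$ witness a near-miss.

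The main obstacle is establishing the strict lower bounds: the degenerate case $M(p)=0$ in the first report and the boundary value $0$ in the second both survive the naive counting, which only yields non-strict inequalities. Both are dispatched by the same device, namely constructing a specific minimum cut from the residual graph and invoking the second flow-based stability condition to force a forbidden pair, contradicting allowedness. Once these boundary cases are excluded, the capacity bookkeeping is routine and everything runs in the polynomial time already guaranteed by the Dinitz algorithm.
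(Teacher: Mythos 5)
Your proof is correct and follows essentially the same route as the paper's: in both report sites you exhibit the witness sets read off from a cut (a single place for the failed decrease, and $X=P\setminus S$, $Y=T$ for the deficient cut, which coincide with the paper's $V_I\cap\set{v_p^i\mid p\in P}$ and $V_I\cap\set{v_p^f\mid p\in P}$), using the min-cut lower bound of $|M|$ for nonnegativity, the bounded total decrease for the upper bound, and the second flow-based stability condition to rule out the boundary value. Your version is slightly more meticulous than the paper's --- you make explicit why $M(p)=0$ cannot occur in the first report and replace the paper's crude $2b|P|$ estimate by the exact count $|P|(b+c)$ --- but the underlying argument is the same.
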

\begin{proof}[Sketch]
We prove that the algorithm's reports of  near-misses are correct
for a net $\net$, markings $M,M'$ and a stable set of restrictions $\restrictions$.
A near miss is reported in two cases.
In the first case we cannot decrease some edge capacity by $|P|$,
after having attempted at most $|P|-1$ decreases for this edge beforehand.
This corresponds to a place of $M$ or $M'$ having more than $0$
but less than $|P|^2$ tokens,
which constitutes a near-miss.

In the second case after decreasing the capacity by $|P|$
along each of the $b$ allowed pairs,
there is some cut $(V_I,V_O)$ with capacity less than $|M|-b|P|$.
Each decrease operation decreases the capacity of each cut
at most by $2|P|$, so the original capacity of the
cut is less than $|M|+b|P|$.
On the other hand, it is strictly more than $|M|$,
as decreasing by $|P|$ along some pair reduced the capacity by more than $|P|$,
which is impossible for any minimum cut
by the second flow-based stability condition.
The
sets
 $X= V_I \cap \set{v_p^i | p\in P }$ and $Y= V_I \cap \set{v_p^f | p\in P }$
provide a near-miss.

\end{proof}

If the algorithm does not report a near-miss, 
then it successfully constructs
a solution flow and reports that $M'$ is reachable from $M$.
A realizable history can be constructed from the solution flow, proving that $M$ can reach $M'$.
Moreover the realizable history and then firing sequence from $M$ to $M'$ can be constructed in polynomial time and are correct by construction.
 
 \begin{restatable}{lemma}{LmPolyTime}
 The algorithm runs in polynomial time given a stable set of restrictions.
 \end{restatable}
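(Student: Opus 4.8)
The plan is to walk through each phase of the algorithm in the order it is executed and bound the work of each phase by a polynomial in the bit-size of the input, keeping in mind that the markings $M$ and $M'$ are given in binary, so $|M|$ and $|M'|$ may be exponential while their encodings are polynomial. The phases are: building the flow graph $G$, computing a solution flow (the capacity decrease/increase passes together with one maximum-flow computation), assembling the markings $M_i$, $M_m$, $M_f$ and the two histories, and finally extracting the firing sequence. Showing that each phase is polynomial, and that the objects passed between phases have polynomial encodings, gives the claim.

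For the flow phase, note that $G$ has $2|P|+2$ vertices and $O(|P|^2)$ arcs, so it is built in polynomial time. The solution-flow construction performs one decrease-by-$|P|$ and one increase-by-$|P|$ pass over the at most $b \le |P|^2$ allowed pairs, each operation adjusting two arc capacities, so this amounts to $O(|P|^2)$ capacity updates. All finite capacities are bounded by $|M|$ and hence have polynomial bit-length, so a maximum flow can be computed in polynomial time by the Dinitz algorithm, and the two near-miss tests (a failed decrease, or a flow value below $|M|-b\times|P|$) are simple comparisons.

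For the history phase, I would first bound the length $h$ of the constructed history. The forward search from $M_i$ to $M_m$ adds newly reached intermediate places $r$ step by step, so it uses at most $|P|$ steps; the backward search from $M_f$ to $M_m$ is symmetric, and the combination plus the final duplication step yields $h = O(|P|)$. The markings $M_i$, $M_m$, $M_f$ are read off from the cardinalities $|\Rset(p,q)|$ in $O(|P|^3)$ time. Crucially, the history is stored compactly: there are only polynomially many distinct trajectory types (each tied to an allowed pair $(p,q)$ and determined by the search), and the multiplicity of each type is at most $|M|$ and stored in binary. Hence the history has a polynomial encoding and is produced in polynomial time.

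Finally, the firing sequence is extracted by scanning the $h-1$ steps of the history and, for each step, reading off the realizing transition $t$ and its iteration count $k_i$; this is polynomial, and the correctness of the association follows from realizability. The main subtlety, and the point that must be stated carefully, is that the unaccelerated firing sequence can have length up to $|M|$ and is therefore exponential in the input size; the algorithm instead outputs it in accelerated form, representing each block $t^{k_i}$ by the pair $(t,k_i)$ with $k_i \le |M|$ written in binary. Since the accelerated length is at most $h-1 = O(|P|)$ and each count has polynomial bit-length, the output is polynomial, completing the argument.
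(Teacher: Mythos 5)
Your proposal is correct and follows essentially the same route as the paper: a phase-by-phase accounting of the flow-graph construction, the solution-flow computation, the history construction, and the extraction of an accelerated firing sequence, with the (important, and only implicit in the paper) observation that multiplicities and iteration counts must be kept in binary since $|M|$ can be exponential in the input size. One intermediate bound is off: the constructed history has one step per allowed triple in each half (forward from $M_i$ to $M_m$ and backward from $M_f$ to $M_m$), so its length is $O(|P|^3)$ rather than $O(|P|)$ --- the searches for different pairs $(p,q)$ cannot be merged into a common step, because realizability requires each step of the history to be realized by a single transition. This does not affect the conclusion, since $O(|P|^3)$ steps with polynomially encoded multiplicities is still polynomial. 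Note also that the paper's own proof of this lemma additionally accounts for the computation of the stable restriction set itself; your reading of the statement as taking that set as given is the literal one, and the missing part is covered separately in Section~\ref{stable-restriction}.
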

 
The runtime analysis is straightforward, and can be found in the appendix. 

\subsection{Computing a stable restriction set}
\label{stable-restriction}

We show that there is a polynomial algorithm
that either computes a stable restriction set,
or correctly reports unreachability.
Starting with the empty set of restrictions,
the algorithm 
repeatedly finds violations of the stability conditions
and modifies the restriction set 
by adding some correct restrictions,
or reports unreachability.
Once no violations can be found, the algorithm terminates.
As the total number of possible triples is $|P|^3$,
only a polynomial number of iterations is needed.
It remains to show that the violations
as well as the corresponding additional correct restrictions
can be found in polynomial time.

\paragraph{First condition.} 
A violation can be found 
by computing the maximum flow.
Such a violation immediately implies
unreachability, since a realizable history induces a maximum flow of value $|M|$.

\paragraph{Second condition.} 
A violation can be found by
considering all the allowed pairs of places $(p,q)$
and computing the maximum flow after
decreasing the capacity by one
along $(p,q)$.
If the decrease is successful and the
maximum flow is $|M|-2$,
then $(p,q)$ is a violating pair, as argued in the section with the flow-based stability conditions.
We add new
correct restrictions by forbidding it.
If the decrease yields a maximum flow of $|M|-1$ 
then this pair does not create a violation.
If the decrease is not possible, then
we add new
correct restrictions by forbidding $(p,q)$.
Indeed if the decrease is not possible, then
the capacity between
$i$ and $v_p^i$ (resp. between $v_q^f$ and $o$) is zero.
The pair $(p,q)$ must be forbidden as there is no realizable history in which a token goes from $p$ to $q$.
The pair provides a violation of the 
condition by the cut which puts only $v_p^i$ and $o$ into the outlet component $V_O$ 
(resp., only $v_q^f$ and $i$ into the inlet component $V_I$)
and which is minimal because it has capacity $|M|$.

\paragraph{Third and fourth condition.} 
Checking for violations
of reachability-based stability conditions
shares part of the approach used to
construct a history out of a solution flow.
For the third condition, the algorithm enumerates
upper bounds on an extended set of restrictions
$\restrictions'$ violating the condition.
We start with $\restrictions'$ equal to all the triples.
We observe that $\restrictions'$ cannot contain $(p,p,q)$
for any pair $(p,q)$ 
such that $(p,p,q)$ is
not  in $\restrictions$. 
We exclude such $(p,p,q)$ from $\restrictions'$.
Then as long as there is
a transition $s\trans{o}d$ and
there are triples $(p,s,q),(p',o,q')\notin\restrictions'$
and $(p,d,q)\in\restrictions'\setminus\restrictions$,
we exclude $(p,d,q)$ from $\restrictions'$.
If we end up proving that $\restrictions'=\restrictions$,
there can be no violation.

Otherwise we prove that all the restrictions in $\restrictions'$
are correct and thus that $\restrictions$ is extendable.
Indeed, by induction, any history satisfying
the restrictions in  $\restrictions$ on all steps
must also satisfy the restrictions in $\restrictions'$.
%

The fourth condition is handled in a symmetric way.

\begin{example}
In our running example,
starting from an empty restriction set,
the second condition reports violations because decreasing is not possible.
It forbids all the pairs but $(PE,E)$,$ (PE,P1)$,$ (R,E)$,$ (R,P1)$.
Checking violations of the third condition
forbids all triples except 
$(PE,PE,E)$,
$(PE,E,E)$,
$(R,R,P1)$,
$(R,P1,P1)$,
$(R,P2,P1)$.
Checking the fourth condition additionally
forbids $(R,P2,P1)$
leaving only four allowed triples
$(PE,PE,E)$, $(PE,E,E)$, $(R,R,P1)$, $(R,P1,P1)$.
This set of restrictions is stable.
\end{example}


This procedure for constructing a stable set of restrictions,
coupled with the previous algorithm in which the stable set was part of the input,
completes the proof of Theorem \ref{thm:reach-near-miss}.

\section{Conclusion and future work}

We have considered two restrictions of 
the IO net reachability problem
with a promise for much simpler verification for some applications
and established the reachability complexity
in both these cases, which is \NP-complete in one case 
and polynomial in the other.
 
We leave the question of complexity of set-set reachability
under these restrictions for future research.
Another related question is
defining a notion of ``approximate'' reachability
that would provide a reduction in complexity
for IO nets,
as merely bounding the maximum difference between token counts
or the sum of differences preserves \PSPACE-hardness 
of the reachability problem.

\subsubsection*{Acknowledgements.}
We wish to thank Javier Esparza for useful discussions.
 We are also grateful to the anonymous reviewers
 for their advice regarding the presentation.

\bibliographystyle{plain}
\bibliography{references}

\appendix
\section{First restriction: transition enabling}

We provide the details of the firing sequence construction
out of a flow.

\begin{lemma}
An integer flow
	of value $|M|=|M'|$
	corresponds to a firing sequence in $\net$,
	provided $\net$ has a fixed set of enabled token moves.
\end{lemma}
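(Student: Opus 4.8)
The plan is to turn the integer flow $f$ into a multiset $\theta$ of token moves and then realize these moves as actual transition firings, one token at a time, maintaining a loop invariant that keeps the current marking reachable from $M$. First I would define the multiset $\theta$ by setting, for each allowed pair $(p,q)$ with $p\neq q$, the multiplicity of the token move $p\to q$ to be $f(v_p^i, v_q^f)$ (writing the flow graph of Section~\ref{sec:restriction1}, where the infinite-capacity arc goes from $p$ to $q$). Moves with $p=q$ are discarded, as they correspond to tokens that never move. The flow conservation at each place vertex, together with $f(i,p)=M(p)$ and $f(p,o)=M'(p)$, guarantees that the net effect of applying every move in $\theta$ exactly once transforms $M$ into $M'$: for each place $p$, the number of moves in $\theta$ leaving $p$ minus the number entering $p$ equals $M(p)-M'(p)$.

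The core of the argument is the greedy firing loop. The plan is to maintain a current marking $M_c$, initialized to $M$, and a current multiset of remaining moves, initialized to $\theta$, with the invariant that $M_c$ is reachable from $M$ via already-fired transitions and that applying all remaining moves once each turns $M_c$ into $M'$. At each step I would pick a remaining move $p\to q$ whose source place $p$ currently satisfies $M_c(p) > M'(p)$; I would then argue such a move must exist whenever $M_c\neq M'$, because the signed surplus $M_c(p)-M'(p)$ summed over all places is zero while some place has a strict positive surplus, and the remaining moves out of any strictly-positive-surplus place are nonempty (otherwise the invariant forces that surplus to be nonpositive). Having chosen $p\to q$, since the token move $p\to q$ is enabled in the fixed set of enabled token moves, there is a transition $t$ of the form $p\xrightarrow{p_o}q$ whose observed place $p_o$ is marked in $M_c$; I would fire $t$ once, remove one copy of $p\to q$ from the remaining multiset, and update $M_c$. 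This preserves the invariant and strictly decreases $|\theta|$, so the process terminates after exactly $|\theta|$ firings, ending at $M'$.

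The main obstacle is verifying that the observed place $p_o$ is actually marked at the moment of firing, so that the chosen transition is genuinely enabled and not merely that the abstract token move is ``enabled'' in the set-theoretic sense. Here I would lean on the definition that a token move $p\to q$ being in the fixed enabled set means precisely that some suitable observed place carries a token throughout; since the set of enabled token moves is fixed (unchanged along the whole sequence, as the hypothesis stipulates), the observation witness for $p\to q$ remains marked, and in particular it is marked in $M_c$. I would spell out that observation tokens are never consumed (the transition puts the observed token back), so firings of other transitions cannot destroy the witness for a move still in $\theta$. The remaining verification — that the invariant is reestablished after each firing and that the surplus-place existence claim holds — is routine bookkeeping on token counts, so I would present it compactly rather than belabor each case.
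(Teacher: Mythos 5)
Your proposal matches the paper's own proof essentially verbatim: both extract a multiset $\theta$ of token moves from the flow and greedily fire, for a move whose source place currently holds more tokens than in $M'$, a transition witnessing that move, justifying enabledness by the fixed set of enabled token moves; your extra bookkeeping on the surplus invariant only spells out what the paper leaves implicit. The sole (harmless) imprecision is that the loop may reach $M'$ with cyclic moves of $\theta$ left unfired, so it takes at most, not exactly, $|\theta|$ firings.
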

\begin{proof}
	Let us consider such a flow $f$.
	It corresponds to a multiset $\theta$ of token moves 
	containing exactly $f(p,q)$ token moves from $p$
	to $q$ for every pair of places $p,q \in P$.
	To prove existence of a firing sequence for each such
	multiset,
	we consider the following (simple but inefficient)
	procedure, starting from $M$.
	We repeatedly pick a token move from some $p$ to some $q$ 
	 from the multiset such that $p$ has more tokens
	in the current marking than in the final marking $M'$.
	This is possible because IO nets are conservative: if there is no such place $p$ then the current marking is equal to $M'$ and we are done.
	We fire a transition of $\net$ with source place $p$ and destination place $q$,
	and remove the token move from the multiset.
	The existence of such a transition, enabled in the current marking,
	is given by the fact that the token move is enabled 
	and so there exists a transition of $\net$ from $p$ to $q$ whose observed place is marked.
\end{proof}

We
describe the reduction from SAT to the reachability problem for non-forgetting IO nets.

\ThNonforgetHard*
\begin{proof}
        \NP-hardness of reachability is proved by a reduction from the SAT problem.
        Consider a SAT instance represented as a circuit of binary ``NAND'' ($\neg{(x \wedge y)}$) operations 
        (any propositional formula can be converted into such form in linear time).
        We construct a net with the following places.
        \begin{itemize}
        \item For each input $x_i$ of the SAT circuit we add places $x_i^\bot$, $x_i^0$, $x_i^1$.
        Informally, marking these places corresponds to the input value
        being unknown, set to $0$ and to $1$ respectively.
        \item For each operation node $n_j$, we add places 
        $n_j^{(\bot,\bot)}$, $n_j^{(\bot,1)}$, $n_j^{(1,\bot)}$,
        $n_j^0$, $n_j^1$.
        Informally, these places correspond to our knowledge about the inputs
        and the output value of the node $n_j$:
        we can know neither input, know that one of the inputs is $1$,
        or know the output value of the node being $0$ or $1$
        (if one output is $0$, the node has the value $1$
        regardless of the other input).
        \end{itemize}
        
        The transitions are as follows.
        \begin{itemize}
        \item A token can move from a place $x_i^\bot$ to
        either of the places $x_i^0$ or $x_i^1$.
        \item A token in one of the places 
        $n_j^{(\bot,\bot)}$, $n_j^{(\bot,1)}$, $n_j^{(1,\bot)}$
        can observe a token in 
        $p_k^0$ or $p_k^1$ where $p_k$ is an input to $n_j$
        and move to the place corresponding to its updated information
        about the arguments.
        \item Let $n_o$ be the output operation node. 
        Any token can observe a token in $n_o^1$ and 
	perform any move that would be allowed by some observation
	(ensuring the non-forgetting property),
        or move to $n_o^1$.
        \end{itemize}

        The initial marking puts one token into each $x_i^\bot$ and $n_j^{(\bot,\bot)}$.

        Such a net is a non-forgetting IO net, 
        and it is easy to see that any execution in this net from the initial marking
        corresponds to guessing some inputs and evaluating the circuit.
        In particular, the marking with all the tokens in $n_o^1$ is reachable
        iff the circuit is satisfiable. This completes the proof.
\end{proof}

\section{Second restriction: token counts}

Below are the omitted or sketched proofs for the polynomial algorithm for reachability of no-near-miss pairs.

\subsection{From solution flow to firing sequence}

First we provide the details of the construction 
of a history from a solution flow.

We start with the history from $M_i$ to $M_m$.
We first produce an ordering of the triples 
$(p,r,q)$ not in $\restrictions$ and not of the form $(p,p,q)$,
and associate a transition to each of them using the first reachability-based stability condition satisfied by our stable set $\restrictions$.
We initialize $\restrictions'$  to be the set of triples 
$(p,r,q)$ not in $\restrictions$ and not of the form $(p,p,q)$.
Note that the first reachability-based stability condition
ensures that for each allowed pair $(p,q)$,
the triple $(p,p,q)$ is allowed.
Indeed, a restriction set additionally 
forbidding the pair $(p,q)$ violates the condition.
While $\restrictions\ne\restrictions'$,
we pick 
a transition $s\trans{o}d$ and
triples $(p,s,q),(p',o,q')\notin\restrictions'$
and $(p,d,q)\in\restrictions'\setminus\restrictions$.
We number $(p,d,q)$, associate to it the transition $s\trans{o}d$,
remove it from $\restrictions'$ and continue.


We say
	a place $r'$ is an \emph{initially-reachable child}
	of place $r$ for pair $(p,q)$
if $(p,r',q)$ was excluded from $\restrictions'$
	because of  some transition $r\xrightarrow{s}r'$.
	The notion of \emph{initially-reachable descendant}
	is defined by transitive and reflexive closure over the initially-reachable child relation.

	We define the first step of the history from $M_i$ to $M_m$ to
	 consist of trajectories of length $1$ 
	such that there is exactly one trajectory in $p$
	for each triple $(p,r,q)$ such that $r\in\Rset(p,q)$.
	We label each trajectory with its triple $(p,r,q)$.
	This first step corresponds to the marking $M_i$.
	The idea is to extend each trajectory of $M_i$  labeled $(p,r,q)$ 
	from $p$ until it reaches place $r$.

We construct the history
by adding one step per triple in our ordering.
At each new step $i+1$, we maintain two things:
\begin{itemize}
\item If there is a trajectory $\tau$ with $\tau(i)=p$ 
then there is a trajectory $\tau'$ with $\tau'(i+1)=p$,
\ie a place once marked by the history stays marked.
\item If $\hat{r}$ is
	 the last place of a $(p,r,q)$-labeled trajectory,
	 then 
	$r$ is an initially-reachable descendant of place $\hat{r}$ for pair $(p,q)$,
and $(p,\hat{r},q)$ is the triple with the largest number in the ordering such that this holds. 
\end{itemize}
Initially this holds as $p$ is an ancestor for all $r\in\Rset(p,q)$.

At each step, we pick the next triple $(p,r',q)$ in the ordering.
It is associated to a transition $\hat{r} \xrightarrow{s}r'$.
	For every place $d$ which is a descendant of $r'$,
	we extend trajectories labeled $(p,d,q)$ with a step from $\hat{r}$ to $r'$.
	The rest of the  trajectories in the history
	are  extended with ``horizontal'' steps
	preserving their  current places.
By construction,
for some $p',q'$ the triple
$(p',s,q')$ is earlier in the certificate,
	so the history includes a trajectory having already
reached the place $s$ and still in $s$,
and so realizability is preserved.
	Eventually all  the trajectories
	reach the place $r$ of their label $(p,r,q)$.
As a trajectory marked with $(p,r,q)$ 
reaches $r$ and stays there afterwards,
the final marking puts in each place $r$
exactly $\set{(p,q) \mid r\in{}\Rset(p,q)}$,
thus we reach the  marking $M_m$.

	We construct a realizable history from
	$M_m$ to $M_f$ in a symmetrical way.
	We  produce an ordering of the triples 
$(p,r,q)$ not in $\restrictions$ and not of the form $(p,q,q)$,
and associate a transition to each of them using the second reachability-based stability condition satisfied by our stable set $\restrictions$.
We define the symmetric notions 
 of  \emph{finally-reachable child} and \emph{finally-reachable descendant}.
 Then we construct the trajectories of the history from $M_m$ to $M_f$,
 working backwards from $M_f$ on  trajectories 
labeled $(p,r,q)$ from $q$ until $r$.
	
	We concatenate these two histories
	(identifying the trajectories
	labeled $(p,r,q)$ in them)
	to obtain a history from $M_i$
	to $M_f$
	with 
	$|\Rset(p,q)|\leq|P|\leq{}f(v_p^i,v_q^f)$
	trajectories from $p$ to $q$.
	We pick an arbitrary trajectory 
	from $p$ to $q$ and 
        increase its multiplicity in the multiset by
	$f(v_p^i,v_q^f)-|\Rset(p,q)|$.
	We do this until there are $f(v_p^i,v_q^f)$ trajectories 
	for every pair of places $(p,q)$.
	This provides a realizable history from $M$ to $M'$.
Realizability is preserved as the sets of steps 
at each position in the history stay the same 
and only multiplicities change.
Such changes cannot create a violation
of the realizability criterion.

\begin{example}
\begin{figure}
        \centerline{\mbox{\includegraphics[]{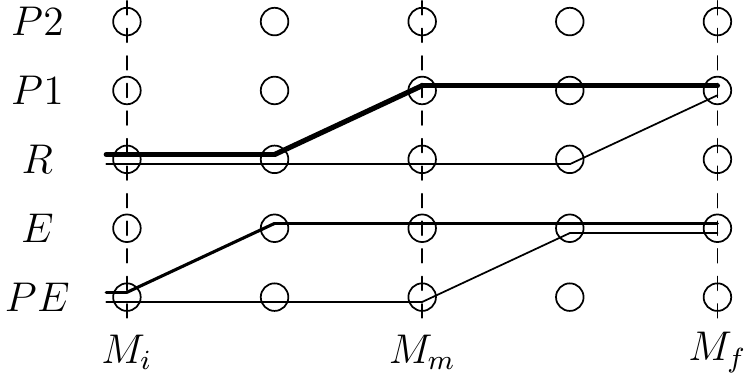}}}
        \caption{A history obtained from a solution flow and a stable set of restrictions.
        Bold trajectories are taken with multiplicities $199$ and $399$.}
        \label{fig:solution-history}
\end{figure}
In our running example, 
from the previously shown restrictions and solution flow in Example \ref{ex:restrictions-solution-flow},
we can obtain the history illustrated in Figure \ref{fig:solution-history}
with $199$ copies of trajectory $PE,E,E,E,E$,
$1$ copy of $PE,PE,PE,E,E$,
399 copies of $R,R,P1,P1,P1$,
and 1 copy of $R,R,R,R,P1$.
Note that this history results from a certain ordering, and that a different ordering provides a different history.
\end{example}

\subsection{Correctness given a stable restriction set}

\LmNMReports*
\begin{proof}

A near miss is reported in two cases.
In the first case, the report arises because
 decreasing capacity $c$ of flow graph
 $G=(V,A,c)$ by $|P|$ along the $b$ allowed pairs of $\restrictions$ is impossible.
In this case,
$M,M'$ is a near-miss pair 
as 
there are less than $|P|^2$ tokens in some marked place of $M$ or $M'$.
This can be seen by examining sets $X=\set{p}$ and $Y=\emptyset$, or $X=\emptyset$ and $Y=\set{p}$ in the definition of a near-miss.

In the second case, the report arrises because decreasing capacity $c$ of flow graph $G=(V,A,c)$ by $|P|$ along the $b$ allowed pairs of $\restrictions$
leads to a maximum flow value less than $|M|-b\times{}|P|$.
 We call $c'$ the capacity post-decrease, and note $G'=(V,A,c')$.
Equality of the minimum cut and the maximum flow
 gives existence of a cut in $G'$ with capacity less than $|M|-b\times{}|P|$.
 Consider such a cut $(V_I, V_O)$ of capacity $\kappa'< |M|-b\times{}|P|$.
 We write $\kappa$ the capacity of cut $(V_I, V_O)$ in $G$
 before the decrease operation.
 Since the maximum flow, and thus minimum cut,
 of $G$ is $|M|$,
 we have $\kappa \ge |M|$.
 Therefore there exists an allowed pair $(p,q)$ such that
 the arcs $(i, v_p^i)$ and $(v_q^f,o)$ both cross the cut,
 as otherwise $\kappa' \ge |M|-b\times{}|P|$.
 Since the restriction set is stable,
 decreasing by $1$ along any allowed pair keeps any cut capacity in $G$
 bigger or equal to $|M|-1$. 
 Thus we have $\kappa > |M|$.
 By structure of $G$ and $G'$, 
 the decreasing operation can reduce 
 a cut capacity by at most $2b\times{}|P|$.
 So $\kappa -\kappa' \le 2b|P|$, and using the inequalities above
 as well as the fact that there are at most $b \le |P|^2$ allowed pairs,
 we get $|M| <\kappa < |M|+|P|^3$. 
 
 Consider the following two vertex sets based on cut $(V_I, V_O)$.
 Let $X= V_I \cap \set{v_p^i | p\in P }$ and $Y= V_I \cap \set{v_p^f | p\in P }$.
 Our cut is finite, so only finite capacity arcs cross it,
 namely the arcs
 from the inlet to vertices $v_p^i$ and from vertices $v_p^f$ to the outlet.
 The capacity in $G$ of this cut is thus $\kappa=M(P\setminus{X})+M'(Y)$.
 Since $|M| < \kappa < |M|+|P|^3$ and $|M|=M(P)$,
 we know $0<M(P\setminus{}X)+M'(Y)-M(P)<|P|^3$.
 By set considerations
 $ M(P)-M(P\setminus{}X) = M(X)$,
 and so finally $0<M'(Y)-M(X)<|P|^3$.
 The sets $X,Y$ prove that $M,M'$ are a near-miss.
\end{proof}


\LmPolyTime*
\begin{proof}
First the algorithm computes a stable set of restrictions.
To this end it repeatedly finds violations
of stability conditions and deduces additional restrictions.

A check of flow-based stability conditions
requires a computation of maximum flow
in the flow graph corresponding to the current 
restriction set,
then one additional maximum flow computation for 
each allowed pair.
A check of reachability-based stability conditions
can be performed
by repeated enumeration
of possible combinations of three triples
and verification of existence of corresponding transitions.
It is clear that both checks can be implemented in polynomial time.

Each iteration either terminates the algorithm
or adds at least one new triple to the set of
known correct restrictions.
As the total number of triples
is polynomial 
and each iteration takes polynomial time,
the total runtime of computing a stable set is polynomial.

If a stable set of restrictions is found,
a solution flow can be found by a maximum flow
algorithm, unless a near-miss is reported.

If a near-miss is reported, a proof can be constructed
either directly by checking all the token counts,
or by running a minimum cut algorithm.

If a solution flow is found, a history constructed
contains two steps per allowed triple, 
one in $M_i$ to $M_m$ and one in $M_m$ to $M_f$.
The numbering of triples for each part
can be built by enumerating combinations of three triples,
then a pass through the numbering is enough to build
reachability child relations.
One more traversal of the numbering,
adding one step to each trajectory at each step,
is enough to build the half-history.

To construct a firing sequence it suffices
to enumerate all pairs of horizontal and non-horizontal steps
at each position in the history,
and check all the transitions for each pair.
Note that identical steps of different trajectories need not
be considered separately.

We observe that all the steps can be performed in polynomial time.
\end{proof}

\end{document}